\documentclass[nofootinbib,superscriptaddress,twocolumn,aps,pra]{revtex4}

\usepackage{amsmath,amsfonts,amssymb}
\usepackage{wrapfig}
\usepackage{graphicx}
\usepackage{bbm}
\usepackage{color}
\usepackage{float}
\usepackage{subfigure}
\usepackage{textcomp}
\usepackage[english]{babel}
\usepackage{pifont}
\usepackage{thmtools}

\declaretheorem{theorem}
\declaretheorem{lemma}
\usepackage{algorithm}

\newenvironment{proof}[1][Proof]{\begin{trivlist}
\item[\hskip \labelsep {\bfseries #1}]}{\end{trivlist}}

\newenvironment{corollary}[1][Corollary]{\begin{trivlist}
\item[\hskip \labelsep {\bfseries #1}]}{\end{trivlist}}
\newenvironment{lemma1again}[1][Lemma 1]{\begin{trivlist}
\item[\hskip \labelsep {\bfseries #1}]}{\end{trivlist}}
\newenvironment{lemma2again}[1][Lemma 2]{\begin{trivlist}
\item[\hskip \labelsep {\bfseries #1}]}{\end{trivlist}}

\newcommand{\qed}{\nobreak \ifvmode \relax \else
      \ifdim\lastskip<1.5em \hskip-\lastskip
      \hskip1.5em plus0em minus0.5em \fi \nobreak
      \vrule height0.75em width0.5em depth0.25em\fi}

\bibliographystyle{naturemag}

\usepackage{mathrsfs}
\usepackage{wasysym} 
\usepackage{psfrag}
\usepackage[colorlinks=true,linkcolor=blue,citecolor=blue]{hyperref}
\usepackage{enumerate} 

\newcommand{\be}{\begin{equation}}
\newcommand{\ee}{\end{equation}}
\newcommand{\bea}{\begin{eqnarray}}
\newcommand{\eea}{\end{eqnarray}}
\newcommand{\ba}{\begin{array}}
\newcommand{\ea}{\end{array}}
\newcommand{\bc}{\begin{center}}
\newcommand{\ec}{\end{center}}
\newcommand{\ben}{\begin{enumerate}}
\newcommand{\een}{\end{enumerate}}
\newcommand{\bi}{\begin{itemize}}
\newcommand{\ei}{\end{itemize}}
\newcommand{\bt}{\begin{table}}
\newcommand{\et}{\end{table}}
\newcommand{\btab}{\begin{tabular}}
\newcommand{\etab}{\end{tabular}}
\newcommand{\bfi}{\begin{figure}}
\newcommand{\efi}{\end{figure}}

\newcommand{\bd}{\begin{description}}
\newcommand{\ed}{\end{description}}

\newcommand{\nn}{\nonumber}


\newcommand{\e}{\mathrm e}




\makeatletter

\def\compoundrel#1\over#2{\mathpalette\compoundreL{{#1}\over{#2}}}
\def\compoundreL#1#2{\compoundREL#1#2}
\def\compoundREL#1#2\over#3{\mathrel
      {\vcenter{\hbox{$\m@th\buildrel{#1#2}\over{#1#3}$}}}}
\makeatother


\newcommand{\djj}{d\kern-0.4em\char"16\kern-0.1em}


\usepackage{babel}
\usepackage{wasysym}
\usepackage{harmony}
\usepackage{semtrans}

\makeatother

\newcommand{\ket}[1]{\left|#1\right\rangle }

\newcommand{\pmset}{\{\ket{\pm}\}}

\def\be{\begin{eqnarray}}
\def\ee{\end{eqnarray}}
\def\bee{\begin{eqnarray*}}
\def\eee{\end{eqnarray*}}

\begin{document}

\title{Entanglement of $\pi$-LME states and the SAT problem}

\author{Adi Makmal}
\affiliation{Institut f{\"u}r Theoretische Physik,
Universit{\"a}t Innsbruck, Technikerstra{\ss }e 25, A-6020 Innsbruck}
\affiliation{Institut f{\"u}r Quantenoptik und Quanteninformation der
\"Osterreichischen Akademie der Wissenschaften, Innsbruck, Austria}
\author{Markus Tiersch}
\affiliation{Institut f{\"u}r Theoretische Physik,
Universit{\"a}t Innsbruck, Technikerstra{\ss }e 25, A-6020 Innsbruck}
\affiliation{Institut f{\"u}r Quantenoptik und Quanteninformation der
\"Osterreichischen Akademie der Wissenschaften, Innsbruck, Austria}
\author{Vedran Dunjko}
\affiliation{Institut f{\"u}r Theoretische Physik,
Universit{\"a}t Innsbruck, Technikerstra{\ss }e 25, A-6020 Innsbruck}
\affiliation{Institut f{\"u}r Quantenoptik und Quanteninformation der
\"Osterreichischen Akademie der Wissenschaften, Innsbruck, Austria}
\affiliation{Laboratory of Evolutionary Genetics, Division of Molecular Biology, Ru\djj er Bo\v{s}kovi\'{c} Institute, Bijeni\v{c}ka cesta 54, 10000 Zagreb, Croatia.} 
\author{Shengjun Wu}
\affiliation{Kuang Yaming Honors School, Nanjing University, Nanjing, Jiangsu 210093, P. R. China}

\date{\today}

\begin{abstract}
In this paper we investigate the entanglement properties of the class of $\pi$-locally maximally entanglable ($\pi$-LME) states, which are also known as the \emph{real equally weighted states} or the \emph{hypergraph states}. The $\pi$-LME states comprise well-studied classes of quantum states (e.g.\ graph states) and exhibit a large degree of symmetry. Motivated by the structure of LME states, we show that the capacity to (efficiently) determine if a $\pi$-LME state is entangled would imply an efficient solution to the boolean satisfiability (SAT) problem. 
More concretely, we show that this particular problem of entanglement detection, phrased as a decision problem, is $\mathsf{NP}$-complete. The restricted setting we consider yields a technically uninvolved proof, and illustrates that entanglement detection, even when quantum states under consideration are highly restricted, still remains difficult.
\end{abstract}

\maketitle


\section{Introduction} \label{sec:intro}
Determining whether a given quantum state is entangled has been studied during the past decade in many different contexts \cite{Horodecki_RMP_2009}. In some settings the state is given as a physical system, in which case one encounters a detection problem, whereas in others it is described classically, e.g.\ in a form of a density matrix, where one is then to solve a decision problem. Both detection \cite{Terhal_2002,Guehne_2009} and decision \cite{Gurvits_2003,2007_Ioannou_QIC,Gharibian_2010,2013_Milner_IEEE,2013_Milner_Arx} problems of entanglement have been considered, and the computational complexity of the latter problems have been analyzed. 

In either context, checking if a general ($n$-partite) quantum state is entangled is a hard problem even when it is pure. By definition, for a pure $n$-partite state one would have to show that the state is not fully separable, i.e.\ $\ket{\psi}\neq\ket{\psi_1}\otimes\ldots\otimes\ket{\psi_n}$, where $\ket{\psi_j}$ describes the $j^{th}$ subsystem. For example, consider a quantum state of $n$ qubits $\ket{\psi}=\alpha_0\ket{0\dotsc 00}+\alpha_1\ket{0\dotsc 01}+\dotsb+\alpha_{2^n-1}\ket{1\dotsc1}$, where all the coefficients are complex and are only restricted by the normalization of $\ket{\psi}$. Changing just a single coefficient may turn a fully separable state to an entangled one and vice versa, so in general we need to take into account all $2^n$ coefficients in order to answer whether $\ket{\psi}$ is entangled.

In this paper we consider a seemingly easier problem by restricting to a class of pure quantum states with high symmetry, and show that, nonetheless, determining whether these states are entangled or not is difficult. 
The class of states, which we consider for the rest of the paper, are $\pi$-locally maximally entanglable ($\pi$-LME) states \cite{Carle_2013} (also known as \emph{real equally weighted} states \cite{2012_Qu_Arx}) in which all of the coefficients $\alpha_j$ are either $+1$ or $-1$:
\begin{equation}
	\label{eq:signLME}
	\ket{\psi_f} = \frac{1}{\sqrt{2^n}} \sum_{\vec{s}\in\{0,1\}^{n}} (-1)^{f(\vec{s})} \ket{\vec{s}},
\end{equation}
where $f(\vec{s})$ is a Boolean function of the $n$-bit string, $\vec{s}$. With this family of states we are still confronted with exponentially many coefficients, but only allow for a binary choice for their values rather than a continuum. 
Consequently, while for general pure $n$-partite states the set of fully separable states is of measure zero, in the class of $\pi$-LME states it is of finite volume.

The set of $\pi$-LME states is a subset of the standard LME-states~\cite{LMEstates}, an interesting set of states, which are defined, up to local unitaries, as 
$\frac{1}{\sqrt{2^n}} \sum_{\vec{s}\in\{0,1\}^{n}}\e^{i \varphi(\vec{s})} \ket{\vec{s}}$.
That is, when writing them in the computational basis, 
all basis states appear with the same probability but generally different phases $\varphi(\vec{s})$. The standard LME states include important sets of states such as all stabilizer states or weighted graph states \cite{LMEstates}, which are useful for quantum error correction schemes \cite{Gottesmann_1997}. In particular, by setting $\varphi(\vec{s})=\frac{\pi}{2} \vec{s}\cdot\!\Gamma\!\cdot\vec{s}$, with $\Gamma$ being an adjacency matrix of a graph of $n$ nodes, one captures all graph states \cite{Raussendorf_2003,Hein_2006}, which are a resource for measurement based quantum computation \cite{Raussendorf_2001}. 
In our case of $\pi$-LME states $\ket{\psi_f}$, the phases are restricted by $\varphi(\vec{s})=\pi f(\vec{s})$. The $\pi$-LME states thus include all graph states and were recently shown to equal the more extended set of \emph{hypergraph states} \cite{2013_Rossi,2013_Qu_PRA}.

Physically, hypergraph states may naturally arise in the model of spin gases \cite{Calsamiglia_2005}. In such a system, a classical gas of particles, each of which carries a qubit degree of freedom that is initially set to $\ket{+}$ (where we use the convention $\ket{\pm}=\frac{\ket{0}\pm\ket{1}}{\sqrt{2}}$), undergoes dynamics during which particles collide. These collisions effectively induce a controlled phase gate onto the participating qubits. 
For example, a two-particle collision induces a 2-qubit phase gate, which corresponds to drawing an edge between the two relevant nodes in the corresponding graph. Allowing for analogous $k$-body interactions gives rise to $k$-qubit phase gates \cite{LMEstates} and to the notion of an edge involving $k$ nodes, i.e., a hyperedge in the hypergraph \cite{2013_Qu_PRA,2013_Rossi}. 
The corresponding Boolean function $f(\vec{s})$ can then be written as a polynomial in $\vec{s}$, involving terms of degree $k$.

A hypergraph state, or equivalently, a $\pi$-LME can therefore be efficiently generated using phase gates involving up to all $n$ qubits, acting on the state $\ket{+}^{\otimes n}$ \cite{LMEstates}. Accordingly, if those phase gates, or, alternatively, the adjacency tensor of the underlying hypergraph, are known then answering if the resulting state $\ket{\psi_f}$ is entangled is, in fact,   straightforward: the state is fully separable if and only if no entangling gate is present, i.e.\ the underlying hypergraph is fully disconnected (there exist no edges) \cite{LMEstates}. To avoid confusion, we stress that in the present work the function $f(\vec{s})$ is given as a Boolean formula and we assume no knowledge on its polynomial form. We will, however, come back to this point later on.

To tackle the question of separability of the $\pi$-LME states we first note that a $\pi$-LME state, $\ket{\psi_f}$, is fully separable if and only if it is a product of the $\ket{\pm}$ states up to a global sign, i.e.\ belongs to the orthonormal set $\pm\{\ket{\pm}\}^{\otimes n}$ \cite{2012_Qu_Arx}.  
This observation simplifies the problem a little: to decide if $\ket{\psi_f}$ is fully separable it is sufficient to check if it is a product of $\ket{\pm}$ states.
We denote the latter property as ``orthonormal set membership (OSM) of $\ket{\psi_f}$ with respect to the $\{\ket{\pm}\}^{\otimes n}$ basis set", and abbreviate it formally as OSM($\ket{\psi_f}, \{\ket{\pm}\}^{\otimes n}$), where from now on, to ease notation, we replace the specification of this particular type of state and basis with an asterisk, and simply write $OSM^*$. Accordingly, performing $OSM^*$ would mean ``determining if $\ket{\psi_f} \in \pm\{\ket{\pm}\}^{\otimes n}$" and we say that a state $\ket{\psi_f}$ is an $OSM^*$ state if and only if $\ket{\psi_f} \in \pm\{\ket{\pm}\}^{\otimes n}$.

We then continue and show that using quantum computation, the ability to perform $OSM^*$ (given as a black box) is powerful enough to solve the satisfiability problem (SAT) efficiently.
Next we phrase the problem as a decision problem in the language of computational complexity theory and show that a similar result holds also for classical computation. 
Moreover, we show that deciding whether given a Boolean function $f$ the corresponding $\pi$-LME state is entangled, or equivalently, not an $OSM^{*}$ state, is an $\mathsf{NP}$-complete problem. This provides an additional link between the difficulty of experimental entanglement detection and the hardness of the related computational problem.

The paper is structured as follows: 
First, Section \ref{sec:complete_separability} states several observations on $\pi$-LME states, which connect between the $OSM^*$ properties of these states, their entanglement properties, and the structure of their generating Boolean function $f$. 
Next, following a few prerequisites in Section \ref{secsec:prerequisites}, we use those observations to show in section \ref{secsec:reduction_proof} that  performing $OSM^*$ would allow for an efficient solution of the SAT problem.
The question of $OSM^*$ is then rephrased as a standard decision problem in Section \ref{secsec:classical_complexity} and its complexity is analyzed.
Taking a different perspective, section \ref{sec:comparing_to_other_tasks} draws relations between the $OSM^*$ task and the task of perfect state discrimination, which sheds light on the (im)possibility of realizing perfect or approximated $OSM^*$ operations.  
Finally, Section \ref{sec:conclusions} connects our $OSM^*$ observations back to the separability question of $\pi$-LME states, and concludes the paper.

\section{Structural observations on $\pi$-LME states} \label{sec:complete_separability}
In what follows we repeat and extend several basic mathematical observations on $\pi$-LME states which were shown in \cite{2012_Qu_Arx}. 
For completeness, we provide (independent) proofs in Appendix \ref{app:second_lemma}, but note that the corollary of the first two lemmas was previously proved in \cite{2012_Qu_Arx}. 

We start with noting that for a $\pi$-LME state $\ket{\psi_f}$ the question of separability is equivalent to asking if the state is an $OSM^*$ state, i.e.\ belongs to the set $\{\ket{\pm}\}^{\otimes n}$:
\begin{lemma} \label{lemma:1}
A $\pi$-LME state $\ket{\psi_f}$ (as given in Eq.\ (\ref{eq:signLME})), is fully separable if and only if each of its $n$ qubits is in either state $\ket{+}$ or in state $\ket{-}$.
\end{lemma}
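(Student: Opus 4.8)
The plan is to prove the two directions separately, with the reverse (``if'') direction being immediate and the forward (``only if'') direction carrying all the content. For the ``if'' direction, if each qubit $j$ is in state $\ket{+}$ or $\ket{-}$, then $\ket{\psi_f}=\bigotimes_{j=1}^{n}\ket{\phi_j}$ with every $\ket{\phi_j}\in\{\ket{+},\ket{-}\}$ is by construction a single-qubit product state, hence fully separable. The real work is the converse, where I would assume $\ket{\psi_f}=\bigotimes_{j=1}^{n}\ket{\phi_j}$ with $\ket{\phi_j}=a_j\ket{0}+b_j\ket{1}$ and expand the product, reading off the amplitude of each computational basis state $\ket{\vec{s}}$ as $\prod_{j:\,s_j=0}a_j\prod_{j:\,s_j=1}b_j$.

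The key first step is to exploit that, in a $\pi$-LME state, every amplitude has the same modulus $2^{-n/2}\neq 0$; in particular no factor $a_j$ or $b_j$ can vanish, since otherwise a whole block of amplitudes would be zero. Comparing the amplitudes of two basis strings that differ only in the $k$-th bit, all other factors cancel and their ratio is $b_k/a_k$; since both moduli equal $2^{-n/2}$, this forces $\abs{a_k}=\abs{b_k}$, and together with normalization $\abs{a_k}^2+\abs{b_k}^2=1$ we obtain $\abs{a_k}=\abs{b_k}=1/\sqrt{2}$ for every $k$.

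The second, and more delicate, step is to pin down the phases using that every amplitude is \emph{real} (equal to $\pm 2^{-n/2}$), not merely of fixed modulus. Writing $a_k=\tfrac{1}{\sqrt{2}}e^{i\alpha_k}$ and $b_k=\tfrac{1}{\sqrt{2}}e^{i\beta_k}$, the amplitude of $\ket{\vec{s}}$ carries the phase $\sum_{j:\,s_j=0}\alpha_j+\sum_{j:\,s_j=1}\beta_j$, which must be an integer multiple of $\pi$ for every $\vec{s}$. Taking the difference between the all-zeros string and the string obtained by flipping bit $k$ eliminates all other phases and yields $\beta_k-\alpha_k\equiv 0 \pmod{\pi}$. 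Hence $\ket{\phi_k}=\tfrac{e^{i\alpha_k}}{\sqrt{2}}(\ket{0}\pm\ket{1})=e^{i\alpha_k}\ket{\pm}$, i.e.\ $\ket{+}$ or $\ket{-}$ up to a phase; the residual global phase $\prod_j e^{i\alpha_j}$ is itself fixed to $\pm 1$ by the reality of the all-zeros amplitude, so each qubit is genuinely in $\ket{+}$ or $\ket{-}$ (up to an overall sign, consistent with membership in $\pm\{\ket{\pm}\}^{\otimes n}$).

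I expect the main obstacle, or at least the only place requiring genuine care, to be this phase step: the equal-modulus argument alone would permit each qubit to be an arbitrary equal-weight superposition such as $\tfrac{1}{\sqrt{2}}(\ket{0}+i\ket{1})$, so it is essential to invoke the reality of the coefficients separately in order to exclude these. The potential vanishing of a single-qubit amplitude deserves a sentence as well, but is dispatched at once by the observation that all amplitudes of a $\pi$-LME state are nonzero.
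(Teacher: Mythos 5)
Your proof is correct, but it takes a different route from the paper's. The paper proves the ``only if'' direction by peeling off one qubit at a time: it groups the sum in Eq.~(\ref{eq:signLME}) as $\frac{1}{\sqrt{2^n}}\sum_{\vec{r}}(-1)^{\tilde f(\vec{r})}\bigl[\ket{0}+(-1)^{g(\vec{r})}\ket{1}\bigr]\ket{\vec{r}}$ with $g(\vec{r})=f(1,\vec{r})\oplus f(0,\vec{r})$, observes that factorization forces $g$ to be constant (so that qubit is exactly $\ket{+}$ or $\ket{-}$), and then iterates on the remaining $(n-1)$-qubit $\pi$-LME state $\ket{\psi_{\tilde f}}$. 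Because the relative amplitude between the two branches is read off directly as $(-1)^{g(\vec{r})}$, complex phases never enter and no parametrization of the factors is needed. You instead parametrize an arbitrary product state $\bigotimes_j(a_j\ket{0}+b_j\ket{1})$ and constrain it: nonvanishing of all amplitudes plus equal moduli gives $\abs{a_k}=\abs{b_k}=1/\sqrt{2}$, and reality of the amplitudes forces $\beta_k-\alpha_k\equiv 0\pmod{\pi}$ and pins the global phase to $\pm1$. Your version is more computational but also more explicit about what each hypothesis does: it isolates exactly where the reality of the coefficients (as opposed to mere equal weight) is used, namely to exclude factors like $\tfrac{1}{\sqrt{2}}(\ket{0}+i\ket{1})$ --- a point the paper's argument settles implicitly --- and it treats all qubits simultaneously rather than by induction. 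The paper's version is shorter and yields the iterative factorization structure as a byproduct; both are sound.
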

\begin{proof} See Appendix \ref{app:second_lemma}. 
\end{proof} 

Lemma \ref{lemma:1} implies that in order to decide if $\ket{\psi_f}$ is fully separable it is sufficient to check if it is a product of $\ket{\pm}$ states. Moreover, we can now explicitly see that out of $2^{2^n}$ possible $\pi$-LME states (of $n$ qubits), exactly $2^{n+1}$ states are products of the $\pmset$ states (including a factor of 2 for the global sign) and are therefore fully separable, whereas the rest must be entangled.
Note that for general pure states  $\ket{\psi}$, Lemma \ref{lemma:1} does not hold: clearly, if $\ket{\psi}$ is not fully separable then it cannot be a product of $\pmset$ states, but the inverse is not true \footnote{In particular,  a single particle state can be in either $\ket{\pm}$ state, whereas entanglement is not even defined in this case.}. For example, the product state $\ket{\Psi}=\ket{00}=\ket{++}+\ket{+-}+\ket{-+}+\ket{--}$ is not a product of the $\pmset$ states.

We now turn to the second lemma, which connects properties of the Boolean function $f$ to the $OSM^*$ structure of the $\pi$-LME states:
\begin{lemma} \label{lemma:2}
A $\pi$-LME state $\ket{\psi_f}$ is a product of $\ket{\pm}$ states (up to a global sign) only if the Boolean function $f(\vec{s})$ is either balanced ($|\{\vec{s}, f(\vec{s})=0\}|=|\{\vec{s}, f(\vec{s})=1\}|=2^{n-1}$) or constant ($f(\vec{s})=c \; \forall \vec{s}$).
\end{lemma}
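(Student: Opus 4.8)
The plan is to exploit the fact that the $\ket{\pm}$ states carry only a sign (no amplitude variation) when written in the computational basis, so that any product of them is itself a $\pi$-LME state whose generating function is \emph{affine}; the balanced-or-constant dichotomy then follows from a standard counting property of affine Boolean functions.

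First I would write $\ket{+}$ and $\ket{-}$ uniformly as $\ket{\epsilon} = \frac{1}{\sqrt 2}\sum_{s\in\{0,1\}}(-1)^{\epsilon s}\ket{s}$, where $\epsilon=0$ selects $\ket{+}$ and $\epsilon=1$ selects $\ket{-}$. Suppose $\ket{\psi_f}$ is a product of such states up to a global sign, say $\ket{\psi_f}=\pm\bigotimes_{j=1}^n\ket{\epsilon_j}$ with $\vec{\epsilon}=(\epsilon_1,\dots,\epsilon_n)\in\{0,1\}^n$. Expanding the tensor product gives $\bigotimes_{j=1}^n\ket{\epsilon_j}=\frac{1}{\sqrt{2^n}}\sum_{\vec s}(-1)^{\vec{\epsilon}\cdot\vec s}\ket{\vec s}$. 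Comparing coefficients with the definition of $\ket{\psi_f}$ in Eq.\ (\ref{eq:signLME}), and absorbing the global sign into a constant bit $c\in\{0,1\}$ (sign $+$ giving $c=0$, sign $-$ giving $c=1$), I obtain $(-1)^{f(\vec s)}=(-1)^{\vec{\epsilon}\cdot\vec s + c}$ for every $\vec s$, and hence $f(\vec s)=\vec{\epsilon}\cdot\vec s + c \pmod 2$. In other words, being a product of $\ket{\pm}$ states forces $f$ to be affine over $\mb{F}_2$.

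It then remains to show that every such affine $f$ is balanced or constant, which I would split into two cases. If $\vec{\epsilon}=\vec 0$ then $f\equiv c$ is constant. If $\vec{\epsilon}\neq\vec 0$, pick any coordinate $k$ with $\epsilon_k=1$ and pair each input $\vec s$ with $\vec s\oplus e_k$, i.e.\ the same string with its $k$-th bit flipped. Since $\vec{\epsilon}\cdot(\vec s\oplus e_k)=\vec{\epsilon}\cdot\vec s\oplus 1$, the two members of each pair receive opposite values of $f$; as this pairing partitions $\{0,1\}^n$ into $2^{n-1}$ disjoint pairs, exactly $2^{n-1}$ inputs satisfy $f(\vec s)=0$ and $2^{n-1}$ satisfy $f(\vec s)=1$, so $f$ is balanced. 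This exhausts the cases and establishes the claim.

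I do not expect a genuine obstacle here: the only places requiring care are the bookkeeping of the global sign, which must be identified with the additive constant $c$ rather than simply discarded, and the observation that the product structure is exactly equivalent to affineness of $f$. Once that equivalence is in place, the balanced-or-constant conclusion is just the classical fact that a nonzero $\mb{F}_2$-linear functional takes each of its two values equally often.
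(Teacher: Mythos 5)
Your proof is correct, but it takes a genuinely different route from the paper's. The paper proceeds by induction on the number of qubits: it shows that the property of being a constant-sign or balanced-sign state in the computational basis is preserved when tensoring on an additional $\pm\ket{+}$ or $\pm\ket{-}$ factor, and then reads the dichotomy for $f$ off the sign pattern of $\ket{\psi_f}$. You instead characterize products of $\ket{\pm}$ states exactly: expanding $\pm\bigotimes_j\ket{\epsilon_j}$ gives coefficients $(-1)^{\vec{\epsilon}\cdot\vec{s}+c}$, so $\ket{\psi_f}$ is such a product if and only if $f$ is affine over $\mb{F}_2$, and the balanced-or-constant conclusion is then the standard fact that a nonzero linear functional on $\mb{F}_2^n$ takes each of its two values on exactly $2^{n-1}$ inputs. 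Your approach buys more than the lemma asks for: the equivalence (product $\Leftrightarrow$ affine) is a complete characterization, which immediately explains why the converse of Lemma \ref{lemma:2} fails (balanced functions need not be affine, e.g.\ the function generating the state of Eq.\ (\ref{eq:balanced_but_not_product})), and it provides an algebraic alternative to the recursive sign-structure characterization of Lemma \ref{lemma:3}. The paper's induction is more pedestrian bookkeeping but self-contained; your argument is shorter and yields the sharper statement, with the only delicate points --- absorbing the global sign into the affine constant $c$, and the disjointness of the pairing $\vec{s}\mapsto\vec{s}\oplus e_k$ --- both handled correctly.
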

\begin{proof}
See Appendix \ref{app:second_lemma}. 
\end{proof}
Combining Lemma \ref{lemma:1} and Lemma \ref{lemma:2} we get that a $\pi$-LME state $\ket{\psi_f}$ is fully separable only if the Boolean function $f$ is either balanced or constant, as was also shown in \cite{2012_Qu_Arx}. This relates the question of separability of $\pi$-LME states to a necessary property of the corresponding Boolean function $f$.

Note that the other direction of Lemma \ref{lemma:2} does not hold: there exist balanced-sign states $\ket{\psi_f}$ (i.e.\ of balanced function $f$),  that are not 
product states in the $\ket{\pm}$ basis, e.g.\ the 3-qubit state
\be
\label{eq:balanced_but_not_product}
\ket{\psi} = &\frac{1}{\sqrt{8}}&(-\ket{000}+\ket{001}+\ket{010}+\ket{011} \nn \\
 &{}&\;-\ket{100}+\ket{101}-\ket{110}-\ket{111})\nn \\
=  &\frac{1}{\sqrt{2}}&(-\ket{+0-}+\ket{-1+}),
\ee
which is the GHZ state \cite{GHZ_state_1989} up to local unitary transformations (constant-sign states, of constant function $f$, are always given, however, by the product states $\pm\ket{+,\ldots,+}$). 
The same can be shown by the following counting argument (a similar analysis was carried in \cite{2011_Bruss,2012_Qu_Arx}):
For $n$ bits, there are $N\!=\!2^n$ possible strings and the number of balanced-sign states is given by the number of possible ways to choose $N/2$ strings (of minus sign) out of total N, i.e.\ $N \choose N/2$. On the other hand, the available number of product basis states $\{\ket{\pm}\}^{\otimes n}$ (including a global sign) is only $2N\!\!=\!\!2^{n+1}$, including the two constant states. 
Since ${N \choose N/2} > 2N-2$ for all $N\!=\!2^n$ with $n\geq 3$, it follows that some balanced-sign states are not products of the $\ket{\pm}$ states (for $n=1,2$ all balanced-sign states are products of $\ket{\pm}$ states). 

Finally, to identify which of the balanced-sign states are products of the $\ket{\pm}$ states, we take a closer look at the tensor product structure in the computational basis $\{\ket{0},\ket{1}\}$
\be
\label{eq:tensor_product}
\ket{\psi}\!=\!\ket{\pm}_{n-1}\!\otimes\!\ldots\!\otimes\ket{\pm}_0\!=\!\frac{1}{\sqrt{2^n}}
\!\left(\!\!
\begin{array}{c}
+1\\
\pm 1
\end{array}
\!\!\right)_{\!\!n-1}\!\!\!\!\!\!\!\!
\otimes\!\ldots\!\otimes
\!\left(\!\!
\begin{array}{c}
+ 1\\
\pm 1
\end{array}
\!\!\right)_{\!\!0}\!\!.
\ee
Then, when expanding $\ket{\psi}$ in the computational basis $\{\ket{0},\ket{1}\}^{\otimes n}$ it is seen that the $k^{th}$ qubit (from right to left, $0\leq k \leq n-1$) is in state $\ket{+}$ ($\ket{-}$) if and only if the first $2^k$ terms (from $\ket{0}$ to $\ket{2^k-1}$) have exactly the same (opposite) signs as the next $2^k$ terms (from $\ket{2^k}$ to $\ket{2^{k+1}-1}$). This proves the following lemma:
\begin{lemma}
\label{lemma:3}
A state $\ket{\psi}$ is a product of the $\pmset$ states if and only if for all $0\leq k \leq n-1$, the first $2^k$ basis states have the same or opposite signs as the next $2^k$ basis states, when $\ket{\psi}$ is expanded in the computational basis $\{\ket{0},\ket{1}\}^{\otimes n}$.
\end{lemma}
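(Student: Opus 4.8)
The plan is to exploit the fact that, since $\ket{\psi}$ is a $\pi$-LME state, every computational-basis amplitude is $\pm 2^{-n/2}$, so the state is completely encoded by a sign function $\sigma(i)\in\{+1,-1\}$ giving the sign of the coefficient of $\ket{i}$ (here I identify the label $\vec{s}$ with the integer $i=\sum_j s_j 2^j$, and write $b_j(i)$ for bit $j$ of $i$). Both sides of the claimed equivalence are invariant under $\ket{\psi}\mapsto-\ket{\psi}$, so I would first normalize the global sign by taking $\sigma(0)=+1$, which is without loss of generality up to the global sign already implicit in the notion of a product of $\pmset$ states. The elementary fact I would record before anything else is purely about binary indices: for $0\le i<2^k$ the integers $i$ and $i+2^k$ agree in bit positions $0,\ldots,k-1$ and differ only in bit $k$ (which is $0$ and $1$ respectively). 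With this, ``the first $2^k$ basis states have the same or opposite signs as the next $2^k$'' is exactly the statement that there is an $\eta_k\in\{+1,-1\}$ with $\sigma(i+2^k)=\eta_k\,\sigma(i)$ for all $0\le i<2^k$.

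The forward (only-if) direction is essentially the tensor-product computation already carried out just above the lemma, and I would phrase it through the sign function. If $\ket{\psi}=\bigotimes_j\ket{\epsilon_j}$ with $\epsilon_j\in\{+1,-1\}$ (writing $\ket{+1}=\ket{+}$, $\ket{-1}=\ket{-}$), then expanding each factor $\ket{\epsilon_j}=\tfrac{1}{\sqrt2}(\ket{0}+\epsilon_j\ket{1})$ gives $\sigma(i)=\prod_j \epsilon_j^{\,b_j(i)}$. Comparing $\sigma(i+2^k)$ with $\sigma(i)$ for $i<2^k$ and using the index fact above, every factor cancels except the one coming from bit $k$, leaving $\sigma(i+2^k)=\epsilon_k\,\sigma(i)$. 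Thus the window condition holds with $\eta_k=\epsilon_k$, i.e.\ the signs agree when qubit $k$ is $\ket{+}$ and are opposite when it is $\ket{-}$, which is precisely the only-if direction.

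For the converse I would assume the window conditions hold with constants $\eta_0,\ldots,\eta_{n-1}$ and prove by induction on $m$ that $\sigma(i)=\prod_{j<m}\eta_j^{\,b_j(i)}$ for all $0\le i<2^m$. The base case $m=0$ is just $\sigma(0)=1$. For the step, any $i<2^{m+1}$ either satisfies $i<2^m$, in which case bit $m$ is $0$ and the inductive hypothesis applies directly, or has the form $i=i'+2^m$ with $0\le i'<2^m$; in the latter case the window condition for $k=m$ gives $\sigma(i)=\eta_m\,\sigma(i')=\eta_m\prod_{j<m}\eta_j^{\,b_j(i')}=\prod_{j\le m}\eta_j^{\,b_j(i)}$, using that $i$ and $i'$ share their low $m$ bits while bit $m$ of $i$ is $1$. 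Taking $m=n$ yields $\sigma(i)=\prod_j\eta_j^{\,b_j(i)}$ for every $i$, which is exactly the amplitude pattern of $\bigotimes_j\ket{\eta_j}$, so $\ket{\psi}$ is a product of $\pmset$ states.

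The only genuinely nontrivial point, and the one I would be most careful about, is that the window condition for a given $k$ constrains only the first $2^{k+1}$ amplitudes, so one might worry whether these nested, prefix-local constraints really determine all $2^n$ signs; the induction is exactly what resolves this, as each successive condition extends the ``already determined'' block from length $2^k$ to $2^{k+1}$. Beyond that, the whole argument is bookkeeping, and the main obstacle is simply keeping the index arithmetic (that $i\mapsto i+2^k$ flips exactly bit $k$ for $i<2^k$) and the sign conventions ($\eta_k=\epsilon_k$, with ``same/opposite'' corresponding to $\ket{+}/\ket{-}$) watertight; I do not anticipate any deeper difficulty.
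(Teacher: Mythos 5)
Your proposal is correct and takes essentially the same route as the paper: both rest on expanding the tensor product of Eq.~(\ref{eq:tensor_product}) in the computational basis, so that the sign choice of qubit $k$ appears exactly as the relative sign between the first $2^k$ and the next $2^k$ amplitudes. The only substantive difference is one of rigor rather than of method: you prove the ``if'' direction explicitly, by an induction that extends the determined sign pattern from blocks of length $2^k$ to $2^{k+1}$, whereas the paper treats that direction as immediate from the same expansion.
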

Lemma \ref{lemma:3} shows once again that the state of Eq.\ (\ref{eq:balanced_but_not_product}) is not $OSM^*$ since the signs of, e.g., the first two basis states $(-,+)$ are neither the same as, nor exactly opposite from the signs of the next two basis states (+,+). Note also that Lemma \ref{lemma:3} also implies Lemma \ref{lemma:2}.

In what comes next we link these structural observations on $\pi$-LME states to the structure of the SAT problem.

\section{Solving SAT by performing $OSM^*$} \label{sec:reduction}
In this section we show that a quantum computer (QC) aided with  an oracle which determines whether $\ket{\psi_f}$ is an $OSM^*$ state, can solve the SAT problem. 

\subsection{Prerequisites} \label{secsec:prerequisites}
For the benefit of the reader, we first introduce the basic notation. 
First, the satisfiability problem (SAT) is defined as follows (see e.g.\ \cite{Garey_1979}): Let $f:\{0,1\}^{n} \!\rightarrow\! \{0,1\}$ be a Boolean function, then SAT outputs ``yes" if there exist some assignments for $\{x_i\}_{i=1}^n$ such that $f(x_1,\ldots,x_n)=1$, and ``no" otherwise.

Next, to evaluate the Boolean function $f$, the algorithm we present uses  
a quantum circuit that implements the unitary transformation $U_f$ (for any classical circuit that implements $f$, there exists a quantum circuit of comparable efficiency \cite{NielsenChuang}), defined as: 
\be \label{eq:oracle_rule}
U_f: \ket{x}\ket{y} \rightarrow  \ket{x}\ket{y \oplus f(x)},
\ee
where $x$ is a $n$-bit string, $y$ is a single bit, 
and $\oplus$ is addition modulo $2$. Recall that \cite{NielsenChuang}:
\be
U_f: \ket{+,\ldots,+}\ket{-} \rightarrow 
\frac{1}{\sqrt{2^n}}\sum_{\vec{s}\in\{0,1\}^{n}} (-1)^{f(\vec{s})} \ket{\vec{s}}\ket{-}, 
\ee
such that the ancillary qubit $\ket{-}$ is decoupled, and we are left with a
$\pi$-LME state $\ket{\psi_{f}}$, as in Eq.\ (\ref{eq:signLME}).
The $\pi$-LME states thus arise naturally in several quantum algorithms (see also \cite{2011_Bruss}) and in particular in the Deutsch-Jozsa algorithm \cite{DeutschJozsa_1992}, as we next use.

\subsection{Solving SAT on a QC with an $OSM^*$ black box} \label{secsec:reduction_proof}
Given an efficiently computable Boolean function $f:\{0,1\}^{n} \rightarrow \{0,1\}$, Alg.\ \ref{prot:SAT} determines if $f$ has a satisfying assignment (SAT) or not by performing $OSM^*$ (for proof of correctness see Appendix \ref{app:proof}). 
Here, we assume a quantum computer empowered with an $OSM^*$ black box which, given a $\pi$-LME state $\ket{\psi_{f}}$ as an output of a quantum circuit, reports if $\ket{\psi_{f}}$ is a product of the $\ket{\pm}$ states. For efficiency considerations we note that throughout Alg.\ \ref{prot:SAT}, $OSM^*$ is called only once, and the function $f$ is evaluated (via $U_f$) at most four times (or three, had $OSM^*$ been a non-demolition operation).

\begin{algorithm}[htbp]
\caption[The protocol]{Solving SAT with $OSM^*$} \label{prot:SAT}
\begin{enumerate}
 \setlength{\itemsep}{-1pt}
 \item Given the Boolean function $f$, construct the quantum circuit that implements $U_f$ as defined in Eq.\ (\ref{eq:oracle_rule}) and apply $U_f$ on $\ket{+,\ldots,+}\ket{-}$ to obtain the state $\ket{\psi_f}\ket{-}$. 
 \item Perform $OSM^*$ on $\ket{\psi_{f}}$. If the answer is ``no" then OUTPUT  \textbf{``satisfying assignment exists"} and BREAK; Otherwise, i.e.\ if the answer is ``yes", continue.
\item Apply the Deutsch-Jozsa algorithm \cite{DeutschJozsa_1992}. If $f$ is balanced then OUTPUT: \textbf{``satisfying assignment exists"} and BREAK; Otherwise, i.e.\ if $f$ is constant, continue.
\item Apply $U_f$ on e.g.\ $\ket{0,\ldots,0}\ket{0}$ and measure the ancilla in the $\{\ket{0},\ket{1}\}$ basis. If $\ket{0}$ is measured then OUTPUT: \textbf{``satisfying assignment does not exist"} and BREAK; Otherwise, i.e.\ if $\ket{1}$ is measured then OUTPUT: \textbf{``satisfying assignment exists"}.
\vspace{-7pt}
\end{enumerate}
\end{algorithm}

This algorithm connects the SAT problem with the ability to perform $OSM^*$, which is by Lemma \ref{lemma:1} equivalent to detecting entanglement for $\pi$-LME states. Since Lemma \ref{lemma:1} actually shows that a $\pi$-LME state is entangled if and only if it is \emph{not} an $OSM^*$ state, it is more natural to consider the task of not-OSM membership, which we denote as 
$\lnot OSM^{*}$. This is not of practical importance in the operational sense, but will become relevant later on in the coming section. 
Informally, the relations between the different tasks can therefore can be summarized as
\be
\label{eq:tasks_relation}
ENT\stackrel{\text{def}}{\geq_{QC}} ENT^{*}\stackrel{\text{lemma}  \ref{lemma:1}}{=}\lnot OSM^{*} \stackrel{\text{Alg. \ref{prot:SAT}}}{\geq_{QC}} SAT, 
\ee
where ENT is entanglement detection for any pure state, ENT$^{*}$ is entanglement detection for $\pi$-LME states (given as an output of a quantum circuit), and the $A \geq_{QC} B$ relation means that a QC with access to a black box A can solve B efficiently.

In this section we assumed, for simplicity, that the $OSM{^*}$ black-box device  distinguishes perfectly between $OSM^*$ and non-$OSM^*$ $\pi$-LME states in one shot. This capacity, however, is clearly impossible within quantum mechanics. 
We could relax our restrictions and merely consider imperfect OSM devices, e.g., of polynomially bounded away errors ($1/2+\mathcal{O}(1/poly(n))$), which would also allow an efficient solution, via standard amplification methods. However, as we clarify in Section \ref{sec:comparing_to_other_tasks}, all such devices, which could be used to solve SAT efficiently, are forbidden by quantum mechanics.
Nonetheless, the description of a Boolean function $f$ is classical, and the problem whether such a function generates an $OSM^*$ or non-$OSM^*$ state  $\ket{\psi_f}$, is computable. In the next section we thus study the complexity of this problem, based on Alg.\ \ref{prot:SAT}, which establishes a simple relationship between the SAT problem and the computational analogous problem of the $OSM^*$ task, or equivalently, between the SAT problem and the separability problem of $\pi$-LME states.

\subsection{$OSM^*$ as a decision problem} 
\label{secsec:classical_complexity}
The question of separability, as a computational problem, was formulated in several different ways \cite{2007_Ioannou_QIC}. 
In particular, deciding if a given density matrix is separable was shown to be $\mathsf{NP}$-hard with respect to Turing reductions, when allowing for an error that scales as an inverse polynomial in the density matrix size \cite{Gurvits_2003,Gharibian_2010}. In addition, when specifying the input in terms of the quantum circuit which outputs a certain quantum state (either pure or mixed), several formal connections to complexity classes have been established in Refs.\ \cite{2013_Milner_IEEE,2013_Milner_Arx}. 
In what follows we tackle the question of separability of $\pi$-LME states by studying  the equivalent problem of $OSM^*$, rephrased as a decision problem. 

A Turing reduction can be thought of as ``an algorithm
that solves one problem by using as a subroutine an algorithm
that solves another problem`` \cite{Oded_Goldreich}. 
Here we show that SAT can be Turing reduced to $OSM^*$ on a classical deterministic Turing machine, when the later is recast as a decision problem: 
\be
\label{eq:turing_red}
 cOSM^* \geq_T^P SAT,
\ee
 where $\geq_T^P$ stands for polynomial-time Turing reductions and where $cOSM^*$, which stands for computational $OSM^*$, is defined as the following decision problem: given a classical description of a Boolean function $f$, decide if $\ket{\psi_f} \in \pm\{\ket{\pm}\}^{\otimes n}$. 

To show this formally we provide a polynomial time Turing reduction from SAT to $cOSM^*$, that is we show that there is a classical algorithm that solves SAT efficiently with a polynomial number of calls (in the number of qubits $n$) to $cOSM^*$. To that end we follow most of the steps of Alg.\ \ref{prot:SAT} with little modification: step (1) is unnecessary since $cOSM^*$ receives $f$ as an input, and step (4), which only evaluates the function $f$ on some input, can easily be performed in polynomial time on a deterministic Turing machine.

The only step in Alg.\ \ref{prot:SAT} which has to be done differently on a classical computer is step (3) in which the Deutsch-Jozsa algorithm is used to differentiate between balanced and constant functions $f$. 
Here, instead, we construct a new Boolean function $g = f(x_1,\ldots,x_n) \land x_{n+1}$. It turns out, as shown in Appendix \ref{app:classical_reduction}, that the function $g$ is constant or balanced if and only if the function $f$ is constant. Hence, by applying  $cOSM^*$ on $g$ we can distinguish between constant and balanced function $f$, as required.

The SAT problem is known to be $\mathsf{NP}$-complete. Eq.\ (\ref{eq:turing_red}) therefore implies that under Turing reduction, $cOSM^*$ is hard for $\mathsf{NP}$ (and trivially so for co-$\mathsf{NP}$, since $\mathsf{NP}$-hard = co-$\mathsf{NP}$-hard with respect to the Turing reduction).
In Appendix \ref{app:classical_reduction} we show an even stronger result: that there exists a Karp reduction from SAT to $\lnot cOSM^*$
\be
\label{eq:Karp_reduction}
	\lnot cOSM^* \geq_K^P SAT,
\ee
where $\geq_K^P$ stands for Karp (polynomial-time many-one) reduction. This implies that $\lnot cOSM^*$ is a Karp $\mathsf{NP}$-hard problem.

We now turn to show that the problem of $\lnot cOSM^*$ is in $\mathsf{NP}$. To that end we use Lemma \ref{lemma:3} to show that verifying that a state $\ket{\psi_f}$ is not an OSM$^*$ state can be done quickly: 
The certificate will be composed of three integer numbers $0\leq k \leq n-1, 0\leq l,m \leq 2^k-1$ with which one could verify that the signs coefficients of the terms $\ket{l}$ and $\ket{m}$ are neither exactly the same nor exactly opposite from those of the terms $\ket{2^{k}+l}$ and $\ket{2^{k}+m}$, when expanding $\ket{\psi_f}$ in the computational basis $\{\ket{0},\ket{1}\}^{\otimes n}$. In other words, the values of $f(l)$ and $f(m)$ are neither exactly the same nor exactly opposite from those of $f(2^{k}+l)$ and $f(2^{k}+m)$. This can be verified with only four evaluations of $f$. For example, in the case of the non-$OSM^*$ state of Eq.\ (\ref{eq:balanced_but_not_product}), a possible certificate would be $k=1,l=0,m=1$, which focuses on the the first two terms ($\ket{000}, \ket{001}$) and the following two ($\ket{010}, \ket{011}$). 
The decision problem $\lnot cOSM^*$ is thus in $\mathsf{NP}$ and is therefore shown to be an $\mathsf{NP}$-complete problem.

Combining the reduction of Eq.\ (\ref{eq:Karp_reduction}) with Lemma $\ref{lemma:1}$ we immediately get that the decision problem cENT$^{*}$, defined as ``given a classical description of a Boolean function $f$, decide if the state $\ket{\psi_f}$ is entangled" is also (Karp) $\mathsf{NP}$-complete.

Before moving on, we take an additional look at $\ket{\psi_f}$ as a hypergraph state. As mentioned in the introduction, if the underlying hypergraph of $\ket{\psi_f}$ is known then deciding if it is fully separable is easy: $\ket{\psi_f}$ is fully separable if and only if its underlying hypergraph has no edges. Our result therefore implies that finding a polynomial time algorithm which converts a Boolean function $f$ to the corresponding hypergraph would entail that $\mathsf{NP}=\mathsf{P}.$

\section{$OSM^*$ and state discrimination} \label{sec:comparing_to_other_tasks}
In the simplest version of state discrimination (SD), a system is being prepared in one of two pure states $\ket{\psi_1}$ or $\ket{\psi_2}$, with equal probabilities. The task is then to decide which of the two states describes the system. Quantum mechanics does not allow perfect SD between non-orthogonal states (see, e.g.\ \cite{NielsenChuang}). In fact, perfect SD would violate the no-cloning \cite{2000_Chefles} and the no-signaling principles \cite{Barnett_2009_review}, and was shown, from the point of view of complexity theory, to suffice for solving the Unique-SAT problem \cite{Abrams_PRL_1998,Aaronson_2005} (i.e.\ SAT with the promise of having at most a single satisfying assignment \cite{Valiant–Vazirani_86}). 
Moreover, for exponentially close states, the Helstrom's bound \cite{Helstrom} provides an optimal error probability which is exponentially close to $\frac{1}{2}$ (implying a complete lack of knowledge).

Indeed, the $OSM^*$ ability entail in certain instances distinguishing between exponentially close states. In particular, perfect SD between the $\pi$-LME states $\ket{\psi_{f_1}}=\ket{+,\ldots,+}$ and $\ket{\psi_{f_2}}=\ket{+,\ldots,+} - \frac{2}{\sqrt{2^n}}\ket{0,\ldots,0}$ (corresponding to no, or a unique satisfying assignment of the characteristic functions $f_1$ and $f_2$, respectively), can be reduced to $OSM^*$: for $\ket{\psi_{f_1}}$, $OSM^*$ answers positively, whereas for $\ket{\psi_{f_2}}$ it answers negatively. 
Performing perfect $OSM^*$ is thus forbidden within quantum mechanics. Note further that $\ket{\psi_{f_1}}$ and $\ket{\psi_{f_2}}$ are not only non-orthogonal, but are also exponentially close (with $n$, the number of qubits). The Helstrom's bound then limits the success probability of imperfect $OSM^*$ in such a way that even a polynomial bounded away error (from $\frac{1}{2}$) would require an exponential number of copies of the two states. This means, in particular, that any attempt to solve SAT efficiently via an  approximated determination of $OSM^*$, is doomed to fail.

\section{Conclusion} \label{sec:conclusions}
We showed that the capacity to perform $OSM^*$, i.e.\ to determine if a $\pi$-LME state is a product of the $\ket{\pm}$ states up to a global sign, would allow for an efficient solution of the SAT problem. Such an ability is, however, forbidden by quantum mechanics. 
In fact, even imperfect $OSM^*$, which is however ``useful", in the sense that it could allow an efficient SAT solution, via standard amplification methods, cannot be realized.

Motivated by this observed relation between $OSM^*$ and SAT, we have defined the analogous decision problem, $cOSM^*$, and showed that it is $\mathsf{NP}$-complete. 
Showing hardness was based on the above connection to SAT, whereas the $\mathsf{NP}$-membership was crucially dependent on the tensor product structure of a multi-partite quantum state. Overall we see that the difficulty of performing $OSM^*$ matches the hardness of the analogous $cOSM^*$ decision problem. This is perhaps unsurprising as it is in concord with the common belief that a quantum computer cannot solve $\mathsf{NP}$-complete problems efficiently, i.e.\ that $\mathsf{BQP}$ does not contain $\mathsf{NP}$ (see, e.g.\ \cite{Aaronson_2005}).

Due to the equivalency between separability and OSM-membership on the set of $\pi$-LME states, shown at the very beginning of this paper, all our observations regarding the $OSM^*$ property, hold trivially for separability as well. In particular, this analysis shows that: (a) Approximated entanglement detection with an error polynomially bounded away from $\frac{1}{2}$, would require, even in the constraint setup of $\pi$-LME states, resources that would scale exponentially with the number of qubits; (b) The decision problem cENT$^*$, i.e.\ given a Boolean function, determine if the corresponding $\pi$-LME state is entangled, is $\mathsf{NP}$-complete. 
This analysis thus provides a new simple example of the connection between concepts in physics and complexity theory. 

\begin{acknowledgments}
We acknowledge support by the Austrian Science Fund (FWF) through the SFB FoQuS: F\,4012.
\end{acknowledgments}

\appendix
\section{Proof of Lemmas  \ref{lemma:1}-\ref{lemma:2}} \label{app:second_lemma}
To ease the reading, we state the Lemmas once again:

\begin{lemma1again} 
A $\pi$-LME state, $\ket{\psi_f}$ (as given in Eq.\ (\ref{eq:signLME})), is fully separable if and only if each of its $n$ qubits is in either state $\ket{+}$ or in state $\ket{-}$.
\end{lemma1again}
\begin{proof} The ``if" side is trivial. 
We show the ``only if" side by first separating off the last qubit and writing the sum in \eqref{eq:signLME} as a sum over the basis states of the first $n-1$ qubits with bit string $\vec{r} \in \{0,1\}^{n-1}$ and the last qubit:
\begin{align}
	\ket{\psi_f} &= \frac{1}{\sqrt{2^n}} \sum_{b=0}^1\sum_{\vec{r}} (-1)^{f(b,\vec{r})} \ket{b} \ket{\vec{r}}  \nn \\
	&= \frac{1}{\sqrt{2^n}} \sum_{\vec{r}} \left[ (-1)^{f(0,\vec{r})} \ket{0} + (-1)^{f(1,\vec{r})} \ket{1} \right] \ket{\vec{r}}  \nn \\
	&= \frac{1}{\sqrt{2^n}} \sum_{\vec{r}} (-1)^{\tilde{f}(\vec{r})} \left[ \ket{0} + (-1)^{g(\vec{r})} \ket{1} \right] \ket{\vec{r}} . 
\end{align} 
By assumption, $\ket{\psi_f}$ is a product state and therefore $g(\vec{r})$ must be constant. The state of the last qubit is thus either $\ket{+}$ (for $g(\vec{r})=0$), or $\ket{-}$ (for $g(\vec{r})=1$). The remaining qubits factorize in the same way iteratively.
$\blacksquare$
\end{proof}

\begin{lemma2again}
A $\pi$-LME state, $\ket{\psi_f}$, is a product of $\ket{\pm}$ states (up to a global sign) only if the Boolean function $f(\vec{s})$ is either balanced ($|\{\vec{s}, f(\vec{s})=0\}|=|\{\vec{s}, f(\vec{s})=1\}|=2^{n-1}$) or constant ($f(\vec{s})=c \; \forall \vec{s}$).
\end{lemma2again}

\begin{proof}
We first show that if a state $\ket{\psi}$ is a product of $\ket{\pm}$ states up to a sign, then in the $\{\ket{0},\ket{1}\}^{\otimes n}$ basis, it is either that all its terms are of the same sign (``constant sign" state), or that half are positive and half are negative (``balanced sign" state). At times we just write ``a state is constant (balanced)" to simplify notation. The proof is by induction on $n$, the number of qubits: 
\bd
	\item[Basis step (n=1):] 
\ed
\begin{itemize}
\vspace{-10pt}
    \item $\ket{\psi} = \pm\ket{+} = \pm \left(\frac{\ket{0}+\ket{1}}{\sqrt{2}}\right)$, where both terms have the same sign (constant sign state).
\vspace{-7pt}
	 \item $\ket{\psi} = \pm\ket{-} = \pm \left(\frac{\ket{0}-\ket{1}}{\sqrt{2}}\right)$, where one term is negative and one is positive (balanced sign state).
\vspace{-7pt}
\end{itemize} 
For $n=1$, a $\pm\ket{\pm}$ state is either constant or balanced.

\bd
	\item[Induction step: ] We assume correctness for all $n \leq m$ and show that it also holds for $m\!\!+\!\!1$. 
Let $\ket{\psi_{m+1}}$ be a product of $m\!\!+\!\!1$ $\ket{\pm}$ states, then   $\ket{\psi_{m+1}} = \pm\ket{\pm}\ket{\psi_m}$, where $\ket{\psi_m}$ is, by assumption, either constant or balanced. The state $\ket{\psi_{m+1}}$ must then take one of the following forms:
\ed

\begin{itemize}
    \item $\ket{\psi_{m+1}} = \pm\ket{+}\ket{\psi_m} = \pm \frac{\ket{0}\ket{\psi_{m}}+\ket{1}\ket{\psi_{m}}}{\sqrt{2}} $
	 \begin{itemize}
		\item If $\ket{\psi_{m}}$ is constant then  $\ket{\psi_{m+1}}$ is constant.
		\item If $\ket{\psi_{m}}$ is balanced then  $\ket{\psi_{m+1}}$ is balanced.
	\end{itemize}

\vspace{-7pt}
	 \item $\ket{\psi_{m+1}} = \pm\ket{-}\ket{\psi_m} = \pm\frac{\ket{0}\ket{\psi_{m}}-\ket{1}\ket{\psi_{m}}}{\sqrt{2}} $
	 \begin{itemize}
		\item If $\ket{\psi_{m}}$ is constant then  $\ket{\psi_{m+1}}$ is balanced.
		\item If $\ket{\psi_{m}}$ is a balanced then  $\ket{\psi_{m+1}}$ is balanced.
	\end{itemize}
\end{itemize} 
\vspace{-7pt}
Thus $\ket{\psi_{m+1}}$ must be either a constant- or a balanced-sign state, thereby proving correctness for any $n$. 
\end{proof}
\begin{corollary}
If a $\pi$-LME state, $\ket{\psi_f}$, is a product of $\ket{\pm}$ states then it is either a constant- or balanced-sign state in the $\{\ket{0},\ket{1}\}^{\otimes n}$ basis. This implies, due to the particular structure of  $\ket{\psi_{f}}$, that the function $f$ is either constant or balanced, which concludes the proof of Lemma \ref{lemma:2}.  \quad $\blacksquare$
\end{corollary}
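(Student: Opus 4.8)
The plan is to prove the statement in two stages: first establish a purely combinatorial fact about the sign pattern of \emph{any} product of $\ket{\pm}$ states when expanded in the computational basis, and then translate that sign pattern into a statement about $f$ using the defining form of $\ket{\psi_f}$ in Eq.~\eqref{eq:signLME}. Splitting the argument this way isolates the only genuinely combinatorial part from what is essentially a direct reading-off of $f$.

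For the first stage I would argue by induction on the number of qubits $n$ that any state of the form $\pm\,\ket{\pm}^{\otimes n}$ has, in the $\{\ket{0},\ket{1}\}^{\otimes n}$ basis, either all $2^n$ coefficients of equal sign (a ``constant-sign'' state) or exactly $2^{n-1}$ positive and $2^{n-1}$ negative coefficients (a ``balanced-sign'' state). The base case $n=1$ is immediate, since $\pm\ket{+}$ is constant and $\pm\ket{-}$ is balanced. For the step I would write a product of $m+1$ factors as $\pm\ket{\pm}\otimes\ket{\psi_m}$, with $\ket{\psi_m}$ constant or balanced by the inductive hypothesis. Tensoring on the left with $\ket{+}$ concatenates two identical copies of the coefficient vector and so preserves both properties, while tensoring with $\ket{-}$ concatenates a copy with its negation, sending a constant state to a balanced one and keeping a balanced state balanced. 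Enumerating these four cases shows the outcome is always constant or balanced, and the overall $\pm$ global sign multiplies every coefficient uniformly and hence cannot alter this classification.

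For the second stage I would use that the coefficient of $\ket{\vec{s}}$ in $\ket{\psi_f}$ equals $(-1)^{f(\vec{s})}$ up to the global $\pm$. A constant-sign expansion then means that all the values $f(\vec{s})$ agree modulo $2$, i.e.\ $f$ is constant, whereas a balanced-sign expansion means that exactly half of the strings satisfy $f(\vec{s})=1$, i.e.\ $|\{\vec{s}: f(\vec{s})=1\}|=2^{n-1}$ and $f$ is balanced. Combining the two stages yields that a product of $\ket{\pm}$ states forces $f$ to be constant or balanced, which concludes the proof of Lemma~\ref{lemma:2}.

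I expect the main obstacle to be the bookkeeping in the inductive step: one must check that no combination of factors can escape the constant/balanced dichotomy, and in particular that tensoring a balanced state with $\ket{-}$ yields a balanced rather than a constant state. Once the inductive invariant is stated cleanly this is routine, and the translation to $f$ in the second stage is essentially immediate from the form of $\ket{\psi_f}$. A cosmetic alternative would be to bypass the induction by invoking Lemma~\ref{lemma:3}, but the self-contained induction keeps the dependency structure simplest.
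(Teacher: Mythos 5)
Your proposal is correct and follows essentially the same route as the paper: an induction on the number of qubits showing that any product $\pm\ket{\pm}^{\otimes n}$ is constant- or balanced-sign in the computational basis (with the same four-case analysis for tensoring $\ket{+}$ or $\ket{-}$ onto a constant or balanced state), followed by reading off that the coefficients $(-1)^{f(\vec{s})}$ force $f$ to be constant or balanced. The only cosmetic difference is that you cleanly separate the combinatorial induction from the translation to $f$, which the paper presents as the Lemma~2 proof plus its corollary.
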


\section{Correctness of Alg. (1)} \label{app:proof}
\begin{theorem}
  Algorithm \ref{prot:SAT} decides SAT.  
\end{theorem}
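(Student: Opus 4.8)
The plan is to verify that each of the four terminating branches of Algorithm \ref{prot:SAT} outputs the correct verdict, relying on the elementary reformulation that a satisfying assignment exists if and only if $f$ is \emph{not} the constant-zero function. I would state this reformulation first, and then argue branch by branch that the branch is reached only under conditions that make its announced output correct. Since the four branches are mutually exclusive and exhaustive, establishing correctness on each one proves that the algorithm decides SAT.

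For step (2), when the $OSM^*$ oracle reports ``no'', I would observe directly that $f$ cannot be constant-zero: if $f\equiv 0$ then $\ket{\psi_f}=\ket{+}^{\otimes n}$, which is manifestly a product of $\ket{\pm}$ states and hence an $OSM^*$ state, contradicting the oracle's answer. Thus $f\not\equiv 0$, a satisfying assignment exists, and the output is correct. The decisive point of the whole proof, and the one I expect to carry the argument, is what the opposite answer buys us: if the oracle reports ``yes'' (the state \emph{is} $OSM^*$), then by Lemma \ref{lemma:2} the function $f$ is guaranteed to be either balanced or constant. This is exactly the promise under which the Deutsch-Jozsa subroutine invoked in step (3) is valid.

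With that promise in hand I would invoke the deterministic, single-query correctness of Deutsch-Jozsa on the balanced-versus-constant promise: if it reports ``balanced'' then $f$ takes the value $1$ on $2^{n-1}>0$ inputs, so a satisfying assignment exists and the output is correct; otherwise $f$ is constant and the algorithm proceeds. Finally, for step (4), $f$ is now known to be constant, so applying $U_f$ to $\ket{0,\ldots,0}\ket{0}$ leaves the ancilla in $\ket{f(\vec{0})}$; measuring $\ket{0}$ certifies $f\equiv 0$ and hence no satisfying assignment, while measuring $\ket{1}$ certifies $f\equiv 1$ and hence that every assignment satisfies. Both outputs are therefore correct.

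The only delicate point, and the one I would emphasize, is the logical interlock between the steps: the reasoning for steps (3)--(4) is sound only because step (2) has already filtered out every state that is not $OSM^*$, and Lemma \ref{lemma:2} then supplies precisely the balanced-or-constant promise that Deutsch-Jozsa requires. Everything else is a routine case check. Collecting the four branches, each of which yields the correct answer, shows that Algorithm \ref{prot:SAT} decides SAT, which completes the proof.
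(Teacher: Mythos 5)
Your proof is correct and follows essentially the same route as the paper's: a branch-by-branch case analysis in which the ``no'' answer of the $OSM^*$ oracle certifies $f\not\equiv 0$, the ``yes'' answer combined with Lemma \ref{lemma:2} supplies exactly the balanced-or-constant promise needed for Deutsch-Jozsa in step (3), and a single evaluation of $U_f$ resolves the tautology-versus-contradiction case in step (4). Your phrasing of the first branch (ruling out only $f\equiv 0$ rather than constancy in general) is a marginally leaner contrapositive of the paper's argument, but the logical structure is identical.
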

\begin{proof}
The proof follows the steps of the algorithm:
At the end of step (1), we have the state $\ket{\psi_f}$. Next, at Step (2), $OSM^*$ is applied on 
$\ket{\psi_f}$. The $OSM^*$ black box can then have two outputs:
\begin{itemize}
\item ``NO" - In that case $\ket{\psi_f}$ is not a product of $\ket{\pm}$ states. In particular, $\ket{\psi_f}\neq \pm\ket{+,...,+}$, implying that $f$ is not constant function. Consequently, $f$ is not a contradiction, implying that a satisfying assignment exists, and SAT answers ``YES"; similarly, $f$ is not a tautology (a function is a tautology when it is satisfied by all assignments). 
\item ``YES"  - In that case $\ket{\psi_f}$ is a product of $\ket{\pm}$ states and by Lemma \ref{lemma:2} we know that the Boolean function $f$ must be either constant or balanced. The protocol then continues to Step (3).
\end{itemize}

In Step (3) we are promised that the Boolean function $f$ is either constant or balanced. We can then use the Deutsch-Jozsa algorithm \cite{DeutschJozsa_1992} which differentiates between balanced and constant Boolean functions with just a single application of the circuit $U_f$. This has two possible outcomes:
\begin{itemize}
\item The function $f$ is balanced - then again $f$ is not constant, so SAT outputs ``YES" (and $f$ is not a tautology). 
\item The function $f$ is constant - then $f$ is either a tautology or a contradiction.  The protocol then continues to Step (4).
\end{itemize}
In Step (4) the protocol distinguishes between the possibility that $f$ is a tautology, in which case SAT  would output ``YES", and the option that $f$ is a contradiction, in which case SAT would output ``NO". This is done by a simple evaluation of $f$ on one fixed and arbitrary input using $U_f$. In both cases a definite output for the SAT problem (as well as for the tautology problem) is obtained, and the protocol  ends. \quad $\blacksquare$
\end{proof}

\section{Proofs for Section \ref{secsec:classical_complexity}} \label{app:classical_reduction}
\begin{lemma}
If a Boolean function $f$ is constant then the composed Boolean function $g=f(x_1,\ldots,x_n)\land x_{n+1}$ is either constant or balanced; Otherwise, i.e.\ if $f$ is not constant, then $g$ neither constant nor balanced.
\end{lemma}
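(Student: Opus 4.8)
The plan is to reduce the entire statement to a single counting quantity, namely the number $M = |\{\vec{x}\in\{0,1\}^n : f(\vec{x})=1\}|$ of satisfying assignments of $f$. First I would note that $g$ is a Boolean function on the $n+1$ variables $x_1,\ldots,x_{n+1}$, so its domain has size $2^{n+1}$; by the definitions used in the paper, $g$ is \emph{balanced} precisely when it has $2^{(n+1)-1}=2^n$ satisfying assignments, and \emph{constant} precisely when it has either $0$ or $2^{n+1}$ of them.

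The key step is to express the number of satisfying assignments of $g$ in terms of $M$. Since $g(\vec{x},x_{n+1}) = f(\vec{x}) \land x_{n+1}$ equals $1$ if and only if both $f(\vec{x})=1$ and $x_{n+1}=1$, each of the $M$ assignments satisfying $f$ contributes exactly one satisfying assignment of $g$ (the one with $x_{n+1}=1$), while every assignment with $x_{n+1}=0$ or $f(\vec{x})=0$ gives $g=0$. Hence $g$ has exactly $M$ satisfying assignments.

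With this identity in hand, both directions follow by a short case analysis. If $f$ is constant, then either $f\equiv 0$, giving $M=0$ and hence $g\equiv 0$ constant, or $f\equiv 1$, giving $M=2^n$, which is exactly the balanced count $2^{(n+1)-1}$, so $g$ is balanced. If instead $f$ is not constant, then it is satisfied by at least one but not all assignments, so $1 \le M \le 2^n-1$; in particular $M \notin \{0,\,2^{n+1}\}$, so $g$ is not constant, and $M \ne 2^n$, so $g$ is not balanced.

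I expect no real obstacle here beyond one bookkeeping subtlety worth stating explicitly: the naive guess that $f\equiv 1$ should make $g$ constant is wrong, because conjoining with the fresh free variable $x_{n+1}$ halves the satisfying set and lands exactly on the balanced count $2^n$ for an $(n+1)$-variable function. Keeping the domain sizes straight (total $2^{n+1}$, balanced threshold $2^n$) is the only place where care is needed.
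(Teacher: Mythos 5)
Your proof is correct and takes essentially the same route as the paper: the paper's proof simply ``notices'' three facts (that $g$ is a contradiction iff $f$ is, that $g$ cannot be a tautology, and that $g$ is balanced iff $f$ is a tautology), and these are exactly the consequences of your counting identity that $g$ has the same number $M$ of satisfying assignments as $f$ within a domain of doubled size $2^{n+1}$. Your write-up merely makes explicit the counting that the paper leaves implicit, including the correct observation that $f\equiv 1$ yields a balanced, not constant, $g$.
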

\begin{proof}
If is enough to notice that: 
\begin{itemize}
\item The function $g$ is a contradiction if and only if $f$ is a contradiction.
\item The function $g$ cannot be a tautology.
\item The function $g$ is balanced if and only if $f$ is a tautology.
\end{itemize}
It follows directly that $g$ is constant or balanced if and only if $f$ is a constant. \quad $\blacksquare$
\end{proof}

\begin{theorem}
SAT is Karp-reducible to $\lnot cOSM^*$:
  \be
	{}^\lnot cOSM^* \geq_m^P SAT
  \ee
\end{theorem}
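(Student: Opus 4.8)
The plan is to exhibit a single polynomial-time computable map $f\mapsto h$ such that $f$ has a satisfying assignment if and only if $\ket{\psi_h}$ fails to be an $OSM^*$ state. The guiding fact is the characterization implicit in Lemma \ref{lemma:3}: expanding a product $\pm\bigotimes_k \ket{\pm}$ shows that $\ket{\psi_h}$ is $OSM^*$ exactly when $h$ is an affine Boolean function, $h(\vec{z})=\vec{a}\cdot\vec{z}\oplus b$. Thus $\lnot cOSM^*$ is precisely the problem ``is $h$ non-affine?''. The naive candidate suggested by the Turing reduction, $g=f(\vec{x})\land x_{n+1}$, is \emph{not} a valid many-one reduction: for a tautology $f\equiv 1$ (a YES-instance of SAT) it yields the affine function $g=x_{n+1}$, hence an $OSM^*$ state, which is a NO-instance of $\lnot cOSM^*$. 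Handling the tautology case uniformly is the crux.

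The fix introduces two fresh variables and a genuinely quadratic witness: set $h(\vec{x},y_1,y_2)=f(\vec{x})\land y_1\land y_2$. First, if $f$ is unsatisfiable then $h\equiv 0$, so $\ket{\psi_h}=\ket{+}^{\otimes(n+2)}$ is $OSM^*$ and $h\notin \lnot cOSM^*$. Conversely, if $f(\vec{x}_0)=1$ for some $\vec{x}_0$, then restricting $h$ to $\vec{x}=\vec{x}_0$ leaves the function $y_1\land y_2$, whose algebraic normal form contains the term $y_1 y_2$ and is therefore not affine; since every restriction of an affine function is affine, $h$ itself cannot be affine, and hence $\ket{\psi_h}$ is not $OSM^*$, i.e.\ $h\in \lnot cOSM^*$. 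This argument treats the tautology on equal footing with every other satisfiable $f$, which is exactly what the single-AND gadget failed to do.

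To make the witness explicit (and to reconnect with the $\mathsf{NP}$-membership argument), I would order the $n+2$ qubits so that $y_1$ is the most significant bit and $y_2$ the next, with $\vec{x}$ occupying the low $n$ bits. Writing $v_0$ for the integer value of a satisfying assignment $\vec{x}_0$, the triple $(k,l,m)=(n+1,\,0,\,2^n+v_0)$ is a Lemma \ref{lemma:3} certificate: the relative sign across the bit-$(n+1)$ split is $+1$ at $l=0$ but $-1$ at $m=2^n+v_0$ (the single minus sign produced by $f(\vec{x}_0)=1$), so the two halves are neither equal nor opposite. Finally, $h$ is obtained from $f$ by adjoining two AND gates, so the map is computable in polynomial time and the reduction is Karp. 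The main obstacle is precisely the design of $h$: a one-variable gadget cannot separate a contradiction from a tautology (both are affine), so the reduction must inject a degree-two term that is switched on by \emph{any} satisfying assignment, and one must verify that this forces non-affineness for every satisfiable $f$ rather than merely for every non-constant $f$.
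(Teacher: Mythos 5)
Your proposal is correct, and your reduction map is in fact identical to the paper's: your $h(\vec{x},y_1,y_2)=f(\vec{x})\land y_1\land y_2$ is exactly the paper's $g(x_1,\ldots,x_{n+2})=f(x_1,\ldots,x_n)\land x_{n+1}\land x_{n+2}$. Where you diverge is the correctness argument. The paper argues by counting: if $f$ is satisfiable, then $g$ has between $1$ and $2^n=\frac{1}{4}2^{n+2}$ satisfying assignments, hence is neither constant nor balanced, and the contrapositive of Lemma \ref{lemma:2} (which is only a necessary condition for $OSM^*$) forces $\ket{\psi_g}$ to be non-$OSM^*$. You instead use the exact characterization that $\ket{\psi_h}$ is $OSM^*$ if and only if $h$ is affine, $h(\vec{z})=\vec{a}\cdot\vec{z}\oplus b$ --- a correct fact that follows by expanding $\pm\bigotimes_k\ket{\pm}$, and one the paper never states in this algebraic form (Lemma \ref{lemma:3} is its sign-pattern equivalent) --- and then argue by restriction: fixing $\vec{x}=\vec{x}_0$ with $f(\vec{x}_0)=1$ leaves $y_1\land y_2$, whose algebraic normal form has the quadratic term $y_1y_2$, and since restrictions of affine functions are affine, $h$ cannot be affine. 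Both arguments are sound; yours buys a sharper structural criterion and, as a bonus, an explicit Lemma \ref{lemma:3} certificate $(k,l,m)=(n+1,0,2^n+v_0)$, which I verified is valid (the two index pairs have same and opposite relative signs, respectively), thereby tying the hardness proof directly to the paper's $\mathsf{NP}$-membership argument; the paper's counting route is more elementary given the lemmas it has already established. Your diagnosis of why the one-variable gadget $f\land x_{n+1}$ fails as a Karp reduction (a tautology $f$ yields the affine, balanced function $x_{n+1}$, hence an $OSM^*$ state) is also correct, and is precisely why the paper uses that gadget only inside the Turing reduction, where the tautology case is handled by a separate query.
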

\begin{proof}
Given a Boolean function $f\!\!\!:\!\!\{0,1\}^{n} \!\rightarrow\! \{0,1\}$ for which we want to decide SAT, we (efficiently) construct another Boolean function $g\!\!\!:\!\!\{0,1\}^{n+2} \!\rightarrow\! \{0,1\}$, such that $g(x_1,\ldots,x_{n+2}) = f(x_1,\ldots,x_n) \land x_{n+1} \land x_{n+2}$, and show that $\ket{\psi_g}$ is not $OSM^*$ if and only if the function $f$ is satisfiable.

First, if $f$ is not satisfiable, i.e.\ a contradiction, then $g$ is also a contradiction and $\ket{\psi_g}$ is a constant sign state (defined in Section \ref{sec:complete_separability}). Thus when $f$ is not satisfiable, $\lnot cOSM^{*}$ outputs ``no" on $g$. Next, note that due to the particular structure of the function $g$, at most a quarter of all its possible assignments are satisfying assignments (where exactly quarter of satisfying assignments exist for a tautology function $f$, of which all assignments are satisfying assignments). This implies that for any  satisfiable function $f$, the number of satisfying assignments for $g$ is between one and $\frac{1}{4}2^{n+2}$, implying that $g$ is neither constant nor balanced and therefore  $\lnot cOSM^{*}$ for $g$ outputs: ``yes", as required. \quad $\blacksquare$

\end{proof}

\bibliography{bibliography}
\bibliographystyle{unsrt}
\end{document}